\newtheorem{theorem}{Theorem}[section]
\newtheorem{lemma}[theorem]{Lemma}
\newtheorem{cor}[theorem]{Corollary}
\renewcommand{\>}{\rangle}
\renewcommand{\v}{\vec{v}}
\renewcommand{\u}{\vec{u}}
\begin{document}

\title{Recovering the period in Shor's algorithm with Gauss' algorithm for lattice basis reduction}
\author{Allison Koenecke\thanks{Mathematics Department, Massachusetts Institute of Technology, Cambridge, MA, USA; \texttt{allisonk@mit.edu}}\quad\quad Pawel Wocjan\thanks{Mathematics Department \& Center for Theoretical Physics, Massachusetts Institute of Technology, Cambridge, MA, USA; on sabbatical leave from Department of Electrical Engineering and Computer Science, University of Central Florida, Orlando, FL, USA; \texttt{wocjan@eecs.ucf.edu}}}

\date{January 18, 2013}

\maketitle

\begin{abstract}
Shor's algorithm contains a classical post-processing part for which we aim to create an efficient, understandable method aside from continued fractions.

Let $r$ be an unknown positive integer.  Assume that with some constant probability we obtain random positive integers of the form $x=\lfloor N k/r \rceil$ where $\lfloor \cdot \rceil$ is either the floor or ceiling  of the rational number, $k$ is selected uniformly at random from $\{0,1,\ldots,r-1\}$, and $N$ is a parameter that can be chosen. The  problem of recovering $r$ from such samples occurs precisely in the classical post-processing part of Shor's algorithm.  The quantum part (quantum phase estimation) makes it possible to obtain such samples where $r$ is the order of some element $a\in\mathbb{Z}^\times_n$ and $n$ is the number to be factored. 

Shor showed that the continued fraction algorithm can be used to efficiently recover $r$, since if $N>2r^2$ then $k/r$ appears in lowest terms as one of the convergents of $x/N$ due to a standard result on continued fractions.  We present here an alternative method for recovering $r$ based on the Gauss algorithm for lattice basis reduction, allowing us to efficiently find the shortest nonzero vector of a lattice generated by two vectors.  Our method is about as efficient as the method based on continued fractions, yet it is much easier to understand all the details of why it works. 
\end{abstract}

%
%

\section{Introduction}

In the classical post-processing part of Shor's algorithm, the task is to recover the unknown positive integer $r$ from samples of the form 
\begin{equation}\label{eq:form}
\left\lfloor N k/r \right\rfloor \quad\mbox{or}\quad \left\lceil N k/r \right\rceil.
\end{equation}
where $k$ is selected uniformly at random from $\{0,1,\ldots,r-1\}$, and $N$ is chosen to be a sufficiently large power of $2$.  These samples are produced by quantum phase estimation.  

Here $r$ denotes the order of some randomly chosen element $a$ of the unit group $\mathbb{Z}^\times_n$ of the residue ring $\mathbb{Z}_n$ where $n$ is the number to be factored.  The probabilistic reduction of integer factorization to order finding shows that when $a$ is chosen randomly, then there is a high probability that $r$ is even and $\gcd(a^{r/2}-1,n)$ yields a nontrivial factor of $n$.  This is described  in \cite[Subsection 7.3.1]{KLM}.

We now briefly summarize the idea underlying phase estimation.  Let $U$ be the unitary transformation corresponding to the permutation of $\mathbb{Z}_n$ defined by $j\mapsto a\cdot j$, where $a\in\mathbb{Z}_n$.  Observe that orbit of $1$ under this permutation is $1=a^0,a^1,\ldots,a^{r-1}$, implying that $U$ acts as a cyclic shift operator of order $r$ when restricted to the subspace spanned by $|1\rangle, |a\rangle,\ldots,|a^{r-1}\rangle$.  Therefore $|1\>$ is a uniform superposition of eigenvectors $|\psi_k\>$ with eigenvalue $e^{2\pi i k/r}$ for $k=0,\ldots,r-1$. (These correspond to the eigenvectors and eigenvalues of the cyclic shift operator of order $r$ when we identify $U$ with its restriction to the above subspace and use $|j\>$ instead of $|a^{j}\>$ to denote the basis vectors of the subspace.)

The analysis in \cite[Theorem 7.1.5]{KLM} shows that if we run quantum phase estimation of $U$ in the state $|1\>$, then we obtain samples of the form as in (\ref{eq:form}) with 
with probability greater or equal to $\frac{8}{\pi^2}$, where $k$ is selected uniformly at random from $\{0,1,\dots,r-1\}$.  The uniform distribution over $k$ occurs because $|1\>$ is a uniform superposition of the eigenvectors $|\psi_k\>$ and, thus, quantum phase estimation behaves as if we had a classical mixture of the $|\psi_k\>$.
	
Let $x$ be an outcome of the form in as (\ref{eq:form}) implying $\left| \frac{x}{N} - \frac{k}{r} \right| < \frac{1}{N}$.
If $N$ is greater than or equal to $2 r^2$, then $k/r$ in lowest term appears in lowest terms as one of the convergents of the continued fraction expansion of $x/N$.  This is a standard result in the theory of continued fractions (see \cite[Theorem 7.17]{KLM} for a formulation of the results as relevant for the recovery problem and \cite[Chapter 15 and Theorem 15.9]{burton} for a proof).  Note that we need that $\gcd(k,r)=1$ to be able to recover $r$, which happens this probability $\varphi(r)/r=\Omega(1/\log\log r)$.\footnote{This shows that by repeating this experiment only $O(\log \log r)$ times, we are assured of a high probability of success.  See the short discussion of two methods due to Odylzko and Knill, making it possible to achieve constant success probability \cite[page 1501]{Shor}.}

We present here a different method for recovering $r$ that requires two samples and succeeds with constant probability.  Our method relies on the Gauss algorithm for lattice reduction, which makes it possible to efficiently compute shortest (nonzero) lattice vectors in lattices generated by two vectors.\footnote{It can be shown that the two basis vectors returned by the Gauss algorithm are always the two successive minima of the lattice.  But we do not need this more general result.}  There are two mainly pedagogical reasons for considering this alternative method.  First, all the necessary technical details can be explained and proved in a self-contained way in less than five pages, whereas proving that continued fractions yields the desired approximation requires more effort.  Second, this method can be seen as a very special case of a more general method for obtaining an approximate basis of a higher-dimensional lattice $L$ from an approximate generating set of its dual lattice $L^*$, which plays an important rule in quantum algorithms for some number-theoretic problems \cite{FW}.  This higher-dimensional lattice reconstruction problem cannot be solved by methods related to continued fractions.  It is thus helpful to first understand the special case that applies to the simple reconstruction problem arising in Shor's algorithm before studying the the significantly more difficult higher-dimensional problem.  

%
%

\section{Recovering the period with the Gauss algorithm}

Let $x$ and $y$ be two outcomes of the phase estimation algorithm.  Assume that both samples have the form as given in (\ref{eq:form}) and that the corresponding $k$ and $\ell$ are coprime.  It is relatively easy to show that the probability of $k$ and $\ell$ being coprime is greater than $1/2$ (see \cite[Lemma 20]{SW} and \cite{felix} for a better lower bound).

Let $s$ be an integer that we fix later.  Consider the linearly independent vectors
\[
\vec{x} = 
\left(
\begin{array}{c}
1 \\ 
0 \\ 
s x/N
\end{array}
\right)\quad\mbox{and}\quad
\vec{y} = 
\left(
\begin{array}{c}
0 \\ 
1 \\ 
s y/N
\end{array}
\right) \quad\mbox{in $\mathbb{Q}^3$}
\]
and let $L=\mathbb{Z} \vec{x} + \mathbb{Z} \vec{y}$
denote the lattice generated by these two vectors.

At this stage we only need to know that the Gauss algorithm determines two integers $m$ and $n$ such that $\vec{u} = m \vec{x} + n \vec{y}$ is a shortest nonzero vector.  This is proved in \cite[Section 2]{vallee}.  For the sake of completeness, we provide a simplified proof in the next section.

%
%

\begin{theorem}
Let $B$ be an upper bound on the unknown integer $r$.  
Set $s = 4B^2$ and choose an integer $N$ with $N\ge \sqrt{2} s$.
Let $x=\lfloor N k/r \rceil$ and $y=\lfloor N \ell/r \rceil$ be the samples as in (\ref{eq:form}).  Assume that $k$ and $\ell$ are coprime.
Then, the vector $\vec{u} = (-\ell) \vec{x} + k \vec{y}$
is the unique (up to multiplication by $-1$) shortest nonzero vector of $L$.
\end{theorem}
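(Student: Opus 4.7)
The plan is to exhibit an explicit upper bound $\|\vec{u}\|^2 < 4B^2$ and then, by a short case analysis on the coefficient pair $(m,n)$, to show that every lattice vector $\vec{w} = m\vec{x} + n\vec{y}$ which is not an integer multiple of $\vec{u}$ satisfies $\|\vec{w}\|^2 \ge 4B^2$; integer multiples $c\vec{u}$ with $|c|\ge 2$ are handled trivially.

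First I would compute $\|\vec{u}\|^2$ explicitly. Writing $x = Nk/r + \delta_k$ and $y = N\ell/r + \delta_\ell$, the rounding convention in~(\ref{eq:form}) gives $|\delta_k|, |\delta_\ell| < 1$. The combination $-\ell x + k y$ collapses to $-\ell\delta_k + k\delta_\ell$, so the third coordinate of $\vec{u}$ has magnitude at most $(s/N)(k+\ell)$, which using $k,\ell \le r-1 < B$ and $s/N \le 1/\sqrt{2}$ (from $N \ge \sqrt{2}\, s$) is at most $\sqrt{2}\, B$. Combined with $k^2 + \ell^2 < 2B^2$, this gives $\|\vec{u}\|^2 < 4B^2$.

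Next I would analyze a generic lattice vector $\vec{w} = m\vec{x}+n\vec{y}$. The key arithmetic observation is that $mk+n\ell = 0$ together with $\gcd(k,\ell)=1$ forces $(m,n) = c(-\ell,k)$ for some $c \in \mathbb{Z}$, i.e., $\vec{w} = c\vec{u}$. Hence if $\vec{w}$ is not a multiple of $\vec{u}$, then $mk+n\ell$ is a \emph{nonzero} integer. The third coordinate of $\vec{w}$ equals $s(mk+n\ell)/r + s(m\delta_k+n\delta_\ell)/N$, so its magnitude is at least
\[
\frac{s}{r} - \frac{s}{N}(|m|+|n|) \;\ge\; 4B - \frac{|m|+|n|}{\sqrt{2}},
\]
using $s/r \ge s/B = 4B$. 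Now split on the size of $m^2+n^2$: if $m^2+n^2 \ge 4B^2$ the first two coordinates already give $\|\vec{w}\|^2 \ge 4B^2$; otherwise $|m|+|n| \le \sqrt{2(m^2+n^2)} < 2\sqrt{2}\, B$, so the third coordinate has magnitude $\ge 4B - 2B = 2B$ and its square alone gives $\|\vec{w}\|^2 \ge 4B^2$. In both subcases $\|\vec{w}\|^2 \ge 4B^2 > \|\vec{u}\|^2$. Finally, any multiple $c\vec{u}$ with $|c|\ge 2$ obviously satisfies $\|c\vec{u}\| \ge 2\|\vec{u}\|$.

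The main obstacle is the tight bookkeeping: the specific choice $s = 4B^2$ with $N \ge \sqrt{2}\, s$ has to simultaneously keep the rounding-error contribution to $\|\vec{u}\|^2$ small enough that $\|\vec{u}\|^2 < 4B^2$, while making the ``integer jump'' $s/r \ge 4B$ in the third coordinate of a non-parallel $\vec{w}$ large enough to dominate that same rounding-error bound after the triangle inequality. The structural part of the argument is transparent once one notices that coprimality of $k$ and $\ell$ upgrades the Diophantine condition $mk+n\ell=0$ to the clean lattice identity $(m,n) \in \mathbb{Z}\cdot(-\ell,k)$; the rest is calibrating constants.
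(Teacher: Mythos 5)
Your proof is correct and follows essentially the same route as the paper's: bound $\|\vec{u}\|$ by roughly $2B$, use coprimality of $k$ and $\ell$ to conclude that any lattice vector not parallel to $\vec{u}$ has $mk+n\ell\neq 0$, and then show via the triangle inequality that its third coordinate (or else $m^2+n^2$ itself) forces its length past $2B$. Your explicit case split on $m^2+n^2\ge 4B^2$ is just a rephrasing of the paper's ``we may assume $\sqrt{m^2+n^2}\le 2B$'' step, so there is nothing substantively different to report.
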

%
%
\begin{proof}
First, consider the vector 
\[
\vec{u} =
(-\ell) \vec{x} + k \vec{y} 
= 
\left(
\begin{array}{c}
-\ell \\ k \\ s \big ( -\ell x/N + k y/N \big)
\end{array}
\right).
\]
The absolute values of the first two entries of $\vec{u}$ are bounded from above by $r-1$ since $k,\ell\le r-1$.
To bound the absolute value of the third entry, we write $x/N = k/r + \xi_x$ and $y/r = \ell / N + \xi_y$ with $|\xi_x|,|\xi_y| \le 1/N$. The triangle inequality implies
\[
|s ( -\ell x/N + k y/N )| =  |s( -\ell k/r + k \ell/r - \ell \xi_x + k \xi_y )| \le 2s(r-1)/N.
\]
We obtain the the upper bound 
\[
\| \vec{u} \|_2 \le \sqrt{ (r-1)^2 + (r-1)^2 + \big( 2s \cdot (r-1)/N \big)^2 } = (r - 1) \sqrt{2 + 4s^2/N^2} \le 2 B
\]
since $k,\ell\le r-1$, $r\le B$, and $N \ge \sqrt{2} s$.

Second, we show that the above vector $\vec{u}$ is the unique (up to multiplication by $-1$) shortest nonzero vector of $L$.  Assume to the contrary that 
$\vec{z}=m \vec{x} + n \vec{y}$ is a shortest nonzero lattice vector with $(m,n) \neq \pm(-\ell,k)$.  Clearly, we must also have $(m,n)\neq c(-\ell,k)$ for all integers $c$ with $|c|\ge 2$ since in this case $\vec{z}=c \vec{u}$ cannot be a shortest nonzero lattice vector.   This implies that $m k + n \ell \neq 0$.

We have
\[
\| \vec{z} \|_2^2 = m^2 + n^2 + \left( m s \frac{x}{N} + n s \frac{y}{N} \right)^2.
\]
We may assume that $\sqrt{m^2 + n^2} \le 2B$ because otherwise $\vec{z}$ would be longer than $\vec{u}$.  We obtain
\begin{eqnarray*}
\| \vec{z} \|_2 
& > &  
\left| m s \frac{x}{N} + n s \frac{y}{N} \right| \\
& = &
s \left| m \left( \frac{k}{r} + \xi_x \right) + n \left( \frac{\ell}{r} + \xi_y \right) \right| \\
& \ge &
\frac{s}{r} \Big| m k + n \ell \Big| - \frac{s}{N} \Big(|m| + |n|\Big) \\
& \ge &
\frac{s}{r} - \frac{s}{N} \sqrt{2} \sqrt{m^2 + n^2} \\
& \ge &
\frac{s}{r} - \frac{s}{N} \sqrt{2} \, 2B  \\
& \ge &
\frac{s}{r} - \frac{s}{2B} \\
& \ge &
\frac{s}{2r} 
\ge 
\frac{s}{2B}  
\ge 
2 B,
\end{eqnarray*}
implying that $\vec{z}$ would be longer than $\vec{u}$.
\end{proof}

The above theorem shows that we can recover the value $k$ corresponding to $x$.  We need the following lemma to show that $Nk/x$ is sufficiently close to the integer $r$.
\begin{lemma}
Let $\zeta,\zeta'\in [a,b] \subset [0,1]$.  Then
\[
\left| \frac{1}{\zeta} - \frac{1}{\zeta'} \right| \le \frac{1}{a^2} |\zeta-\zeta'|.
\]
\end{lemma}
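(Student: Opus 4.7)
The lemma is a basic estimate, so the proof should be short. My plan is to combine the two fractions over a common denominator and then bound the denominator below using the hypothesis that both points lie in $[a,b]$.

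Concretely, I would start from the identity
\[
\frac{1}{\zeta} - \frac{1}{\zeta'} = \frac{\zeta' - \zeta}{\zeta\,\zeta'},
\]
take absolute values, and observe that since $\zeta,\zeta' \in [a,b]$ with $a>0$ (implicit, since otherwise the quantity $1/\zeta$ need not be defined), we have $\zeta\,\zeta' \ge a^2$. Hence
\[
\left|\frac{1}{\zeta} - \frac{1}{\zeta'}\right| = \frac{|\zeta-\zeta'|}{\zeta\,\zeta'} \le \frac{|\zeta-\zeta'|}{a^2},
\]
which is the claimed inequality.

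An alternative route would be to apply the mean value theorem to $f(t) = 1/t$ on the interval with endpoints $\zeta$ and $\zeta'$; one gets $f(\zeta)-f(\zeta') = -\frac{1}{c^2}(\zeta-\zeta')$ for some $c$ between them, and since $c \ge a$ the derivative is bounded in absolute value by $1/a^2$. Either argument is essentially one line, and I see no real obstacle; the only thing worth remarking is that the hypothesis $[a,b]\subset[0,1]$ is used solely through the lower bound $\zeta,\zeta'\ge a>0$ (the upper bound $b\le 1$ is not needed for this particular inequality, though it will no doubt be used in the subsequent application where $\zeta = x/N$ and $\zeta' = k/r$). I would therefore present the direct algebraic version, since it keeps the proof completely self-contained.
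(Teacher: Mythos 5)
Your proof is correct. The paper's own proof is the one-line calculus argument you mention as your alternative: it observes that $f(\zeta)=1/\zeta$ is Lipschitz on $[a,b]$ with constant $1/a^2$ (the paper writes this as $\min_{\zeta''\in[a,b]}\{f'(\zeta'')\}$, which is a slight slip since $f'(\zeta'')=-1/\zeta''^2$ and the relevant quantity is really $\max_{\zeta''\in[a,b]}|f'(\zeta'')|=1/a^2$). Your primary route --- combining the fractions to get $\left|\frac{1}{\zeta}-\frac{1}{\zeta'}\right| = \frac{|\zeta-\zeta'|}{\zeta\zeta'}$ and bounding $\zeta\zeta'\ge a^2$ --- is genuinely different in that it avoids calculus altogether and, as a bonus, sidesteps the sign issue in the paper's phrasing. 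You are also right on both side remarks: the hypothesis $a>0$ is implicitly needed (and holds in the application, where $a=1/r-1/N>0$), and the upper bound $b\le 1$ plays no role in the inequality itself. Either argument is perfectly adequate; the direct algebraic one is the more self-contained of the two.
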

\begin{proof}
This follows since the function $f(\zeta)=1/\zeta$ is Lipschitz continuous with constant given by $\min_{\zeta''\in[a,b]} \{f'(\zeta'')\}=1/a^2$.
\end{proof}

We have 
\[
\left|
\frac{x}{N k } - \frac{1}{r} 
\right|
< \frac{1}{N k} \le \frac{1}{N}.
\]
We now apply the above lemma with $\zeta = 1/r$ and $\zeta' = x/(Nk)$ and $a=1/r - 1/N$ and obtain
\[
\left|
\frac{N k}{x} - r 
\right| \le \frac{1}{a^2 N} < 1. 
\]

We now see that we have to choose $N$ on the order of $B^2$ to obtain an estimate that is close to $r$.

%
%

\section{Gauss algorithm}

Let $\vec{u}$ and $\vec{v}$ be two arbitrary vectors in $\mathbb{Z}^d$ and $M:=\max\{\|\vec{u}\|,\|\vec{v}\|\}$.  We refer to $M$ as the length of the basis $\vec{u},\vec{v}$.  We show that the Gauss algorithm makes it possible to determine a shortest nonzero vector of the lattice $\mathbb{Z}\vec{u}+\mathbb{Z}\vec{v}$ in time that scales polynomially in $d$ and $\log(M)$.  We summarize and simplify the necessary results in \cite[Section 2]{vallee}.  

To apply the Gauss algorithm to the vectors $\vec{x}$ and $\vec{y}$ from the previous section, we have to multiply them by $N$ to ensure that all entries are integers.  The parameter $d$ is equal to $3$ in this case.

We need two definitions to present and analyze the algorithm.  
For $f\in\mathbb{Q}$, define the closest integer to $f$ to be the unique integer $m$ such that $f-m\in(-\frac{1}{2},\frac{1}{2}]$.  We denote the closest integer to $f$ by $[f]$.  For $f\in\mathbb{Q}$, define the sign of $f$ to be $+1$ if $f$ is nonnegative and $-1$ otherwise.  We denote the sign of $f$ by $s(f)$.

We start with the basis $\vec{u}$ and $\vec{v}$ where we assume that $\|\vec{u}\| \le \|\vec{v}\|$.  
We replace the vector $\vec{v}$ by the shortest vector $\chi(\vec{v},\vec{u})$ of the set
\[
K(\v,\u) := \{ \vec{w} \mid \vec{w} = \varepsilon (\v - m \u), m\in\mathbb{Z}, \varepsilon = \pm 1\}
\]
that makes an acute angle with $\u$.  Note that $\chi(\v,\u)$ is easy to calculate from $f=\u \cdot \v / \| \u \|^2$.  

We see that $f$ is the solution to the quadratic minimization problem $\|\vec{w}(f)\|^2 = \|\v - f \u\|^2 $ with respect to $f$.  This yields the optimal value of the concave-up parabola to be $f$ where $f\in\mathbb{R}$.  However, if we are required to use integer values, we have that $[f] \in\mathbb{Z}$ gives us the shortest norm.
Hence the optimal integer $m$ is equal to $[f]$ and $\varepsilon$ is is equal to $s(f-[f])$.

\begin{itemize}
\item[] REPEAT
\begin{itemize}
\item[1.] IF $\| \u \|^2 > \| \v \|^2$, exchange $\u$ and $\v$;
\item[2.] $\v := \chi( \v, \u )$;
\end{itemize}
\item[] UNTIL $\| \u \|^2 \le \| \v \|^2$. 
\end{itemize}

The following result describes the output configuration:

%
%

\begin{lemma}[Shortest lattice vector]
Given an arbitrary basis $\u,\v$ of a lattice $L$  in $\mathbb{Z}^d$, the Gauss algorithm outputs a shortest nonzero vector of $L$.\footnote{It can be shown that the two vectors output by the Gauss algorithm are the two successive minima of $L$.  But we do not need this stronger result.}  
\end{lemma}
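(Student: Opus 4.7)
The plan is to decompose the proof into three parts: termination of the algorithm, characterization of the output configuration, and correctness (that the first output vector is shortest).

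First, termination. I would track the squared norm $\|\u\|^2$ at the start of each iteration (after the possible swap in step 1, so that $\|\u\|^2 \le \|\v\|^2$). The reduction step $\v := \chi(\v, \u)$ strictly shrinks $\|\v\|$ whenever $[f] \ne 0$, because $[f]$ is by construction the unique integer minimizer of $\|\v - m\u\|^2$. If $[f] = 0$, then $\v$ is unchanged up to sign, the condition $\|\u\|^2 \le \|\v\|^2$ still holds, and the loop terminates. If $[f] \ne 0$ and the loop does not yet terminate, the subsequent swap promotes this strictly shorter $\v$ into the role of $\u$, so $\|\u\|^2$ strictly decreases across that iteration. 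Since every operation preserves integrality and the vectors stay in $\mathbb{Z}^d$, the squared norms are positive integers, so the strictly descending sequence must terminate in finitely many steps.

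Second, characterization of the output. I would show that on termination the output basis $\u, \v$ satisfies $\|\u\|^2 \le \|\v\|^2$ and $|\u \cdot \v| \le \|\u\|^2 / 2$. The first is exactly the exit condition. For the second, the last executed operation was $\v := \chi(\v_{\text{prev}}, \u)$, so $\v = \pm(\v_{\text{prev}} - [f]\u)$ with $f = \u \cdot \v_{\text{prev}}/\|\u\|^2$, whence $\u \cdot \v = \pm \|\u\|^2 (f - [f])$, and the bound follows because $f - [f] \in (-1/2, 1/2]$ by the definition of closest integer. I would also note that the unimodularity of every operation keeps $\u, \v$ a basis of $L$ throughout, in particular nonzero and linearly independent.

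Third, correctness. Let $\vec{w} = a \u + b \v$ be an arbitrary nonzero lattice vector with $(a,b) \in \mathbb{Z}^2$. Expanding $\|\vec{w}\|^2$, bounding the cross term via $|\u \cdot \v| \le \|\u\|^2/2$, and then using $\|\v\|^2 \ge \|\u\|^2$ yields
\[
\|\vec{w}\|^2 \ge a^2 \|\u\|^2 - |ab| \|\u\|^2 + b^2 \|\v\|^2 \ge \big(a^2 - |ab| + b^2\big) \|\u\|^2.
\]
The identity $a^2 - |ab| + b^2 = (|a| - |b|)^2 + |ab|$ makes it clear that this coefficient is at least $1$ for any nonzero integer pair $(a,b)$: if both entries are nonzero then $|ab| \ge 1$, and otherwise one of $a^2, b^2$ is at least $1$ and the other term vanishes. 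Hence $\|\vec{w}\| \ge \|\u\|$, so $\u$ is a shortest nonzero vector of $L$.

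The main obstacle I expect is the termination step: one has to pin down exactly why the descent on $\|\u\|^2$ is strict from one iteration to the next, and handle the boundary case $[f] = 0$ and the initial iteration (where step 1 may or may not swap) carefully. By contrast, the correctness calculation is a short piece of algebra once the output characterization is in place, but it relies crucially on the sharp bound $|\u \cdot \v| \le \|\u\|^2/2$ delivered by the reduction step.
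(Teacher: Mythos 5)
Your proof is correct, and its key step differs from the one in the paper. Both arguments first establish the same output characterization ($\|\u\|^2\le\|\v\|^2$ and the reduction condition on $\u\cdot\v$, which you state as $|\u\cdot\v|\le\frac{1}{2}\|\u\|^2$ and the paper as $0\le\u\cdot\v\le\frac{1}{2}\|\u\|^2$), but then diverge: the paper argues geometrically, bounding the component of $\v$ orthogonal to $\u$ by $\frac{\sqrt{3}}{2}\|\u\|$ and splitting into the cases $|\beta|=1$ (where it invokes the minimality of $\chi(\v,\u)$ over integer shifts by $\u$) and $|\beta|\ge 2$ (where the orthogonal component alone forces length $>\|\u\|$), whereas you use the single algebraic estimate $\|a\u+b\v\|^2\ge(a^2-|ab|+b^2)\|\u\|^2\ge\|\u\|^2$ via the identity $a^2-|ab|+b^2=(|a|-|b|)^2+|ab|$. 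Your route is more self-contained, since it needs only the two output conditions and no appeal back to how $\v$ was chosen inside the loop; the paper's route is the one that generalizes to showing both output vectors are the successive minima. You also prove termination, which the paper omits from this lemma and instead obtains from the later Gauss$(t)$ complexity analysis; your descent argument is sound, though the clause ``$\chi$ strictly shrinks $\|\v\|$ whenever $[f]\ne 0$'' has a harmless boundary exception at $f=-\frac{1}{2}$ (there $[f]=-1$ but the length is preserved by symmetry of the parabola); what actually carries your argument is the correct observation that if the loop does \emph{not} terminate then the new $\|\v\|$ is strictly smaller than $\|\u\|$, so the swapped-in $\|\u\|^2$ is a strictly decreasing positive integer.
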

\begin{proof}
The output configuration $\u,\v$ satisfies the two conditions
\[
\| \v \|^2 \ge | \u \|^2 \quad \mbox{and} \quad 0\le\u \cdot \v \le \frac{1}{2} \|\u\|^2.
\]
The first condition corresponds directly to the criterion in the UNTIL statement.  The second condition is seen as follows. By definition of the vector $\chi(\v,\u)$ in step 2 we have
\[
0 \le \u \cdot \chi(\v, \u) =
\varepsilon 
\left(
\frac{\u \cdot \v}{\| \u \|^2} - 
\left[ 
\frac{\u\cdot\v}{\|u\|^2} 
\right]
\right) \| \u \|^2.
\]
Clearly, the absolute value of the term in the round parenthesis is at most $\frac{1}{2}$, which implies the second condition.

We now show that the length of the projection of $\v$ orthogonally to $\u$ is greater than $\frac{\sqrt{3}}{2}\| \u \|$.  Express $\v$ as  
\[
\v = \frac{\u\cdot\v}{\|u\|^2} \, \u + \vec{t},
\]
where $\vec{t}$ is orthogonal to $\u$.  Then $\|\v\|^2 \le \frac{1}{4} \|\u\|^2 + \| \vec{t} \|^2$ since the scalar in front of $\u$ in the above expression is in $[0,\frac{1}{2}]$.  Because $\|\u\|^2 \leq \|\v\|^2$, we have $\|\vec{t}\| \ge \frac{\sqrt{3}}{2}\|\u\|$ as claimed.

We are now ready to show that $\u$ is a shortest lattice vector.  Consider vectors of the form
\[
\beta \vec{v} + \alpha \vec{u} \quad \mbox{with $\beta=\pm 1$ and $\alpha\in\mathbb{Z}$.}
\]
Any such vector has length at least $\|\v\|\ge\|\u\|$ due to the choice of $\v := \chi(\v,\u)$ in step 2.  Recall that the parameter $m$ is always chosen so that the length of the resulting vector $\chi(\v,\u)$ is minimal.  Hence any subsequent addition of an integer multiple of $\vec{u}$ to $\chi(\v,\u)$ cannot decrease the length.

Consider vectors of the form 
\[
\beta \v + \alpha \u \quad \mbox{with $|\beta|\ge 2$ and $\alpha\in\mathbb{Z}$.}
\]
Any such vector has length at least $|\beta| \|t\| \ge 2 \,\frac{\sqrt{3}}{2}\|\u\| > \|\u\|$.

The only vectors not covered by the previous two cases are multiples of $\u$.
\end{proof}

To analyze the complexity of the Gauss algorithm we now describe a modified algorithm that depends on a parameter $t$, which is strictly greater than $1$. The new algorithm is called the Gauss$(t)$ algorithm, and is equivalent to the Gauss algorithm for $t=1$. In Gauss$(t)$, the original loop termination condition 
$\| \u \|^2 \le \| \v \|^2 $
is replaced by
\begin{equation}\label{eq:condT}
\| \u \|^2 \le t^2 \, \| \v \|^2.
\end{equation}
The polynomial time complexity of Gauss$(t)$ is clear.  In each loop, the length of the longer vector is decreased by a factor of at least $1/t$ and the length of any nonzero vector of $L\subseteq\mathbb{Z}^d$ is at least $1$.  We obtain an upper bound on the number $k_t$ of iterations of this algorithm executed on a basis of length $M$:
\[
k_t \le \lceil\log_t(M)\rceil.
\]

%
%

\begin{lemma}
Let $k$ and $k_t$ denote the number of iterations of the Gauss algorithm and the Gauss$(t)$ algorithm when applied to the same basis $\u,\v$ of a lattice $L$.
For any $t\le\sqrt{3}$, the two numbers $k_t$ and $k$ satisfy
\[
k_t \le k \le k_t + 1.
\]
\end{lemma}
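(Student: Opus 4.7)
The plan is to compare Gauss$(t)$ and Gauss step by step and observe that they execute the same sequence of basis updates, differing only in their termination check. The inequality $k_t \le k$ is then essentially immediate, while $k \le k_t + 1$ is the substantive part and will use the hypothesis $t \le \sqrt{3}$ together with the invariant produced by step 2 at the end of iteration $k_t$.

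For the lower bound $k_t \le k$: both algorithms perform an identical step 1 (conditional swap) and step 2 ($\v := \chi(\v,\u)$), so the sequences of bases they produce from the same input are identical up to the point where at least one of them halts. Because $t > 1$, the inequality $\|\u\|^2 \le t^2 \|\v\|^2$ is implied by $\|\u\|^2 \le \|\v\|^2$, so Gauss$(t)$ halts no later than Gauss.

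For the upper bound $k \le k_t + 1$: let $\u_\ast, \v_\ast$ denote the basis present when Gauss$(t)$ halts. If $\|\u_\ast\|^2 \le \|\v_\ast\|^2$ already, then Gauss halts at the same moment and $k = k_t$. Otherwise $\|\v_\ast\|^2 < \|\u_\ast\|^2 \le t^2 \|\v_\ast\|^2$, and I need to show that one additional Gauss iteration produces a configuration satisfying $\|\u\|^2 \le \|\v\|^2$. Here I will reuse the invariant established inside the proof of the shortest vector lemma, namely that immediately after step 2 one has $0 \le \u_\ast \cdot \v_\ast \le \frac{1}{2}\|\u_\ast\|^2$. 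In iteration $k_t + 1$ Gauss swaps to $(\v_\ast, \u_\ast)$ and then replaces the second vector by $\chi(\u_\ast, \v_\ast) = \varepsilon(\u_\ast - [f]\v_\ast)$, where $f = \u_\ast \cdot \v_\ast / \|\v_\ast\|^2$.

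The main (and essentially only nontrivial) step will be to check that this new vector has squared length at least $\|\v_\ast\|^2$, which is exactly the Gauss termination condition after the swap. The hypothesis $t \le \sqrt{3}$ enters precisely here: combining the invariant with the Gauss$(t)$ termination inequality gives $f \in [0, t^2/2] \subseteq [0, 3/2]$, and hence $[f] \in \{0,1\}$. A short case split then finishes the argument---for $[f] = 0$ the new vector is $\pm \u_\ast$ and the desired bound reduces to $\|\u_\ast\|^2 \ge \|\v_\ast\|^2$; for $[f] = 1$, expanding $\|\u_\ast - \v_\ast\|^2 = \|\u_\ast\|^2 - 2\u_\ast \cdot \v_\ast + \|\v_\ast\|^2$ reduces it to the invariant $\u_\ast \cdot \v_\ast \le \frac{1}{2}\|\u_\ast\|^2$. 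The part I expect to be most delicate is recognizing that the right quantity to carry across iterations is exactly the invariant incidentally proved in the earlier lemma; without it, the $[f]=1$ case could produce a vector shorter than $\v_\ast$, and without $t \le \sqrt{3}$ the value $[f] = 2$ would be possible and the argument would fail.
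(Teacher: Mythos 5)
Your proposal is correct and follows essentially the same route as the paper: both isolate the final Gauss$(t)$ configuration, use the invariant $0\le\u\cdot\v\le\frac{1}{2}\|\u\|^2$ together with $\|\u\|^2\le t^2\|\v\|^2$ to conclude $f\in[0,3/2]$ and hence $[f]\in\{0,1\}$, and then check that one more Gauss iteration terminates. The only cosmetic difference is that in the $[f]=1$ case you verify $\|\u_\ast-\v_\ast\|^2\ge\|\v_\ast\|^2$ by expanding the square and invoking the invariant directly, whereas the paper appeals to the minimality property of $\chi$; these are equivalent.
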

\begin{proof}
The inequality $k_t\le k$ is clear.  To prove the upper bound consider the last loop of the Gauss$(t)$ algorithm.  Its output configuration satisfies the two conditions
\[
0 \le \u \cdot \v \le \frac{1}{2} \| \u \|^2 \quad\mbox{and}\quad \|\u\|^2 \le t^2 \|\v\|^2.
\]
If $\| \u \|^2 \le \| \v \|^2$ holds, then this is also the last loop of the Gauss algorithm.  So assume that $\| \u \|^2 > \| \v \|^2$. In this case the Gauss algorithm proceeds by exchanging $\u$ and $\v$.  We denote the configuration after this step by $\vec{u'}=\v$ and $\vec{v'}=\u$.  

We have $\|\vec{u'}\|^2 < \|\vec{v'}\|^2$ and
\[
0 \le \frac{\vec{u'}\cdot\vec{v'}}{\| \vec{u'} \|^2} = \frac{\vec{u'}\cdot\vec{v'}}{\| \vec{v'} \|^2} \cdot \frac{\|\vec{v'}\|^2}{\|\vec{u'}\|^2} =
\frac{\vec{v}\cdot\vec{u}}{\| \vec{u} \|^2} \cdot \frac{\|\vec{u}\|^2}{\|\vec{v}\|^2} \le \frac{1}{2} t^2 \le \frac{3}{2}.
\]
The second inequality implies that there are only two cases we need to consider for the new vector $\chi(\vec{v'},\vec{u'})$ in step two of the Gauss algorithm, which are either $\vec{v'}$ or $\pm (\vec{v'} - \vec{u'})$.  If the first case, the Gauss algorithm stops because $\vec{u'}$ is still shorter than the new vector.  In the second case, we have
\[
\| \vec{v'} - \vec{u'} \| = \| \u - \v \| = \| \v - \u \| \ge \|\v\| = \| \vec{u'} \|.
\]
The inequality is due the particular choice of the vector $\v$ in step two of the Gauss$(t)$ algorithm.  Recall that any subsequent addition of a multiple of $\u$ cannot decrease its length.  Hence the Gauss algorithm also terminates in the second case.

\end{proof}

\begin{cor}
The number of iterations $k$ of the Gauss algorithm executed on a basis of length $M$ satisfies
\[
k \le \lceil\log_{\sqrt{3}}(M)\rceil + 1.
\]
\end{cor}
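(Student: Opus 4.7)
The plan is to combine two ingredients already established in the excerpt: the explicit iteration bound for the Gauss$(t)$ algorithm and the comparison lemma relating $k$ and $k_t$. First, I would recall that just before the previous lemma the authors argue that in every loop of Gauss$(t)$ the length of the longer vector shrinks by a factor of at least $1/t$, while every nonzero vector in $L \subseteq \mathbb{Z}^d$ has norm at least $1$. Therefore, on a basis of length $M$,
\[
k_t \le \lceil \log_t(M) \rceil.
\]

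Next, I would specialize to $t = \sqrt{3}$, which is permitted since $t \le \sqrt{3}$ is exactly the hypothesis of the preceding lemma. That lemma gives
\[
k \le k_t + 1.
\]
Chaining these two inequalities with $t = \sqrt{3}$ immediately yields the claimed bound
\[
k \le \lceil \log_{\sqrt{3}}(M) \rceil + 1.
\]

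There is essentially no obstacle here; the corollary is a one-line consequence of results already proved. The only mild subtlety worth flagging in the write-up is why the value $t = \sqrt{3}$ is the right choice: it is the largest $t$ for which the comparison lemma applies, and hence gives the sharpest base of the logarithm achievable by this technique. Everything else reduces to substitution.
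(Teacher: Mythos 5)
Your proposal is correct and matches the paper's intent exactly: the corollary is stated without an explicit proof precisely because it follows by chaining the bound $k_t \le \lceil\log_t(M)\rceil$ with the lemma's inequality $k \le k_t + 1$ at $t=\sqrt{3}$. Nothing further is needed.
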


\subsection*{Acknowledgments}
We would like to thank Nolan Wallach for helpful discussions. 

A.K. gratefully acknowledges the support from MIT's Undergraduate Research Opportunties Program (UROP) under Peter Shor's supervision.
P.W. gratefully acknowledges the support from the National Science Foundation CAREER Award CCF-0746600.  This work was supported in part by the National Science Foundation
Science and Technology Center for Science of Information, under
grant CCF-0939370.

\bibliographystyle{amsplain}

\begin{thebibliography}{10}

\bibitem{KLM}
P.~Kaye, R.~Laflamme, and M.~Mosca, \emph{An Introduction to Quantum Computing}, Oxford, 2007.

\bibitem{burton}
D.~Burton, \emph{Elementary Number Theory}, McGraw-Hill, 7th edition, 2010.

\bibitem{felix}
F.~Fontein, ``The Probability that Two Numbers Are Coprime'' on Felix' Math Page;
{\small \texttt{http://math.fontein.de/2012/07/10/the-probability-that-two-numbers-are-coprime/}}

\bibitem{FW}
F.~Fontein and P.~Wocjan, \emph{Quantum Algorithm for Computing the Period Lattice of an Infrastructure}, arXiv preprint, 2011;
\texttt{http://arxiv.org/abs/1111.1348}

\bibitem{Shor}
P.~Shor, \emph{Polynomial-Time Algorithms for Prime Factorization and Discrete Logarithms on a Quantum Computer}, 
SIAM J. Comput., 26(5), 1484Ð1509, 1997.

\bibitem{SW}
P.~Sarvepalli and P.~Wocjan, \emph{Quantum Algorithms for One-Dimensional Infrastructures}, arXiv preprint 2011;  
\texttt{http://arxiv.org/abs/1106.6347}

\bibitem{vallee}
B.~Vall\'ee, \emph{Gauss Algorithm Revisted}, Journal of Algorithms, 12, pp.~556--572, 1991.

\end{thebibliography}

\end{document}